\newtheorem{thm}{Theorem}
\newtheorem{defn}{Definition}
\newtheorem{rem}{Remark}
\newenvironment{proof}{{\em Proof\/}:}{\\
\rightline{$\blacksquare$}}
\begin{document}
%
\title{Discovering the Markov network structure}
%
%
%
\author{Edith~Kov\'{a}cs
        and~Tam\'{a}s~Sz\'{a}ntai  
\thanks{E. Kov\'{a}cs is with the Department of Methodology,
Budapest College of Management, Budapest, Vill\'{a}nyi \'{u}t 11--13, 1114 Hungary
e-mail: kovacs.edith@avf.hu } 
\thanks{T. Sz\'{a}ntai is with Budapest University of Technology and Economics. }} 
\maketitle

\begin{abstract}
In this paper a new proof is given for the supermodularity of information content.
Using the decomposability of the information content an algorithm is given for discovering the Markov network
graph structure endowed by the pairwise Markov
property of a given probability distribution. A discrete probability distribution is given for which
the equivalence of Hammersley-Clifford theorem is fulfilled although some of the possible vector realizations
are taken on with zero probability. Our algorithm for discovering the pairwise Markov network is illustrated on this example, too.
\end{abstract}

\begin{IEEEkeywords}
information content, multiinformation, pairwise Markov network, junction tree probability distribution
\end{IEEEkeywords}

%
\IEEEpeerreviewmaketitle

\section{Introduction}
%
%
%
%
\IEEEPARstart{M}{arkov} networks together with Bayesian networks are probabilistic graphical
models, widely used for handling probability distributions, which endow
conditional independences.

Discovering the structure of the Markov network is a central task for
many fields like machine learning and its applications, computational
biology, reasoning under uncertainty, disease diagnosis, econometrics,
psychology and others \cite{Schlu12}. Markov networks are also known as Markov Random
Fields (when the underlying graph is a lattice), or as undirected graphical
models since the underlying graph is an undirected one. For good
introductions to this field see \cite{Pea88}, \cite{La96}, \cite{KolFri09}.

The goal of structure learning is to discover conditional independences in
the multivariate data. Typically, the structure learning is addressed through the
following algorithms. One is by evaluating each feature, and the highest
scoring feature is added to the model. The search may follow a top down
(general to specific) strategy \cite{Mc03}, \cite{De97}, \cite{Mal12}, or bottom up
\cite{DaDo10}. Other algorithms are going to
approximate the real probability distribution fitting a junction tree
probability distribution with a given tree width see \cite{Mal91},
\cite{SzaKo08}, \cite{KoSza10}, \cite{SzaKo13}. These
were done in greedy way. These studies were
all about finding approximations of discrete probability distributions.

In \cite{KoSa96} is given a computationally effective method for learning Markov
network structure from data. The method is based on the use of L1
regularization on the weights of log linear model, which has the effect of
biasing the model toward solutions where parameters are zero. This kind of
formulation leads to a convex optimization problem in continuous space,
where it can be solved using efficient gradient methods.

Also there are methods based on statistical independence tests \cite{Spetal00}, Mutual information \cite{CoTo91}, Parson's $\chi ^2$ and $G^2$
\cite{Ag02}. The case of Gaussian data is discussed among others in \cite{Spetal00} and \cite{Whi90}.
Other independence based algorithms are in
\cite{BrMa07}, \cite{BrMa09}. Many of the papers
conclude that unfortunately, the problem of learning Markov networks when
the number of random variables is high remains a hard challenge.

In the present paper we give a polynomial time algorithm in the number of random variables,
for discovering the exact Markov network of a multivariate random vector having the pairwise Markov property.
For doing this we use the concept of the information content.

The paper contains 5 parts. After the introductory part we give a short
overview of the relation between the graph structure underlying the Markov network, the
related hypergraphs and junction trees. Then we prove the supermodularity of
the information content. In the third part we give an algorithm for finding
the exact Markov network structure. In
the fourth part we give an example on which the algorithm can be visualized.
In the last part we summarize our results and discuss on future work.

\section{The supermodularity of the information content}
Let $V=\left\{ 1,\ldots ,n\right\} $ be a set of vertices and ${\mathcal C}$ a
set of subsets of $V$ called {\it set of hyperedges}. A \textit{hypergraph}
consists of a set $V$ of vertices and a set ${\mathcal C}$ of hyperedges. We
denote the hyperedges by $C_i$. If two vertices
are in the same hyperedge they are adjacent, which means, the hyperedge of a
hyperhraph is a complete graph on the set of vertices contained in it.

The \textit{acyclic}{\normalsize \ }\textit{hypergraph} is a special type of
hypergraph which fulfills the following requirements:

\begin{itemize}
\item  Neither of the edges of ${\mathcal C}$ is a subset of another edge.

\item  There exists a numbering of edges for which the \textit{running
intersection property} is fulfilled: $\forall j\geq 2\quad \ \exists \ i<j:\
C_i\supset C_j\cap \left( C_1\cup \ldots \cup C_{j-1}\right) $. (Other
formulation is that for all hyperedges $C_i$ and $C_j$ with $i<j-1$,
$C_i \cap C_j \subset C_s \ \mbox{for all} \ s, i<s<j$.)
\end{itemize}

Let $S_j=C_j\cap \left( C_1\cup \ldots \cup C_{j-1}\right) $, for $j>1$ and $%
S_1=\phi $. Let $R_j=C_j\backslash S_j$. We say that $S_j$\textit{separates}
$R_j$ from $\left( C_1\cup \ldots \cup C_{j-1}\right) \backslash S_j$, and
call $S_j$ separator.

Now we link these concepts to the terminology of junction trees.

The junction tree is a special tree structure which is assigned to the
connected acyclic hypergraphs \cite{LaSp84}, \cite{LaSp88}. The nodes of the tree correspond
to the hyperedges of the connected acyclic hypergraph and are called clusters, the edges of the tree
correspond to the separator sets and called separators. The set of all
clusters is denoted by $\mathcal{C}$, the set of all separators is denoted by %
$\mathcal{S}$. The triplet $(V,{\mathcal C},{\mathcal S})$ defines a junction tree.

The concept of \textit{junction tree probability distribution} is related to
the junction tree graph on the set of indices $V=\left\{ 1,\ldots ,n\right\} $
of a random vector $\mathbf{X=}\left( X_1,\ldots ,X_n\right) ^T$
with a given probability distribution ${\rm P}(\mathbf{X})$ taking on
values in the carthezian product space $\mathbf{\Lambda }=\Lambda _1\times
\ldots \times \Lambda _n$.

A junction tree probability distribution is defined as a product and division of
marginal probability distributions of ${\rm P}(\mathbf{X})$ as follows:

\begin{equation}\label{equ:equ0}
{\rm P}_{{\rm J}}\left( \mathbf{X}\right) =\frac{\prod\limits_{C\in \mathcal{C}}{\rm P}\left(
\mathbf{X}_C\right) }{\prod\limits_{S\in \mathcal{S}}\left[ {\rm P}\left( \mathbf{X}_S\right)
\right] ^{\nu _S-1}},
\end{equation}
where $\mathcal{C}$ is the set of clusters of the
junction tree, $\mathcal S$ is the set of separators, $\nu_S$ is the number of those clusters
which are intimately connected by $S$.

On the index set $V$ one can define a lot of junction trees.
To each junction tree one can assign a junction tree probability distribution
using Formula (\ref{equ:equ0}).

Let us consider a random vector $\mathbf{X}=\left( X_1,\ldots ,X_n\right) ^T$, with the
set of indices $V=\left\{ 1,\ldots ,n\right\}$. 
The {\it graph of a Markov network} consists of a set of
nodes V, and a set of edges $E=\left\{ \left( i,j\right) |i,j\in V\right\}$. 
We say the graph structure associated to the Markov network has

\begin{itemize}
\item  the \textit{Pairwise Markov} (PM) property if $\forall i,j\in V$, $i$
not connected to $j$ implies that $X_i$ and $X_j$ are conditionally
independent given all the other random variables;

\item  the \textit{Local Markov} (LM) property if $\forall i\in V,$ and
$ {\rm Ne}\left( i\right) $ the neighborhood of node $i$
in the graph (the nodes connected with $i$) then $X_i$ is conditionally independent from
all $X_j$, $j\notin {\rm Ne}\left( i\right) $, given $X_k,k\in Ne\left( i\right) $;

\item  the \textit{Global Markov} (GM) property states that if in the graph
$\forall A,B,C\subset V$ and $C$ separates $A$ and $B$ in terms of graph then $\mathbf{X}_A$ 
and $\mathbf{X}_B$ are conditionally independent given $\mathbf{X}_C$, which means in terms of
probabilities that
\[{\rm P}\left( \mathbf{X}_{A\cup B\cup C}\right) =\frac{{\rm P}\left( \mathbf{X}_{A\cup C}\right) {\rm P}\left( \mathbf{X}_{A\cup C}\right) }{{\rm P}\left( \mathbf{X}_C\right) };
\]
\end{itemize}

\begin{rem}\label{rem:1}
If a junction tree probability distribution ${\rm P}_{{\rm J}}(\mathbf{X})$ is associated to
${\rm P}(\mathbf{X})$ then all realizations which occur with positive probability in ${\rm P}(\mathbf{X})$ will occur with positive probability in ${\rm P}_{{\rm J}}(\mathbf{X})$.
\end{rem}

\begin{thm}(\cite{SzaKo08})
\label{thm:2.1}
The Kullback-Leibler divergence between the true ${\rm P}(\mathbf{X})$ and a junction tree probability distribution
${\rm P}_{{\rm J}}(\mathbf{X})$, determined by the set of clusters $\mathcal{C}$ and the set of separators $\mathcal{S}$ is:
\begin{equation}\label{equ:equ1}
\begin{array}{l}
KL\left( {\rm P}\left( \mathbf{X}\right) ,{\rm P}_{{\rm J}}\left( \mathbf{X}\right) \right)
= \sum\limits_{i=1}^{n} H\left( X_i\right) - H\left( \mathbf{X}\right) -\\
-\left( \sum\limits_{C\in \mathcal{C}}I\left(
\mathbf{X}_C\right) -\sum\limits_{S\in \mathcal{S}}\left( \nu _S-1\right) I\left(
\mathbf{X}_S\right) \right),
\end{array}
\end{equation}
where  $\nu_S$ is
the number of the clusters separated intimately by $S$, $I(\mathbf{X}_{C})=\sum\limits_{i\in \mathcal{C}} H\left(X_i\right) - H\left(
\mathbf{X}_{C}\right)$ represents the {\it information content} of the random vector $\mathbf{X}_{C}$
and similarly $I(\mathbf{X}_{S})=\sum\limits_{i\in \mathcal{S}} H\left(X_i\right) - H\left(
\mathbf{X}_{S}\right)$ represents the {\it information content} of the random vector $\mathbf{X}_{S}$.
\end{thm}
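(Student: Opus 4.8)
The plan is to expand the Kullback-Leibler divergence straight from its definition and to reduce every resulting term to an entropy of a marginal distribution. First I would write, with $\mathbf{x}$ a realization and $\mathbf{x}_C,\mathbf{x}_S$ its restrictions to $C$ and $S$,
\[
KL\left({\rm P}(\mathbf{X}),{\rm P}_{{\rm J}}(\mathbf{X})\right)=\sum_{\mathbf{x}}{\rm P}(\mathbf{x})\log\frac{{\rm P}(\mathbf{x})}{{\rm P}_{{\rm J}}(\mathbf{x})}=-H(\mathbf{X})-\sum_{\mathbf{x}}{\rm P}(\mathbf{x})\log{\rm P}_{{\rm J}}(\mathbf{x}),
\]
where the sum runs over the support of ${\rm P}$; Remark~\ref{rem:1} guarantees ${\rm P}_{{\rm J}}(\mathbf{x})>0$ wherever ${\rm P}(\mathbf{x})>0$, so the logarithm is well defined. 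Taking the logarithm of the factorization (\ref{equ:equ0}) converts the product and division into a sum, $\log{\rm P}_{{\rm J}}(\mathbf{x})=\sum_{C\in\mathcal{C}}\log{\rm P}(\mathbf{x}_C)-\sum_{S\in\mathcal{S}}(\nu_S-1)\log{\rm P}(\mathbf{x}_S)$.

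The next ingredient is a marginalization identity: for any $C\subseteq V$, summing ${\rm P}(\mathbf{x})\log{\rm P}(\mathbf{x}_C)$ over the coordinates outside $C$ collapses ${\rm P}$ to its marginal, so that $\sum_{\mathbf{x}}{\rm P}(\mathbf{x})\log{\rm P}(\mathbf{x}_C)=\sum_{\mathbf{x}_C}{\rm P}(\mathbf{x}_C)\log{\rm P}(\mathbf{x}_C)=-H(\mathbf{X}_C)$, and likewise for each separator. Substituting these back gives the intermediate entropy form
\[
KL\left({\rm P}(\mathbf{X}),{\rm P}_{{\rm J}}(\mathbf{X})\right)=-H(\mathbf{X})+\sum_{C\in\mathcal{C}}H(\mathbf{X}_C)-\sum_{S\in\mathcal{S}}(\nu_S-1)H(\mathbf{X}_S).
\]

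It then remains to pass from entropies to information contents. Using $H(\mathbf{X}_C)=\sum_{i\in C}H(X_i)-I(\mathbf{X}_C)$ and the analogous relation for $\mathbf{X}_S$, the divergence separates into the information-content combination $-\sum_{C}I(\mathbf{X}_C)+\sum_{S}(\nu_S-1)I(\mathbf{X}_S)$ plus a residual assembled only from the single-variable entropies $H(X_i)$. I expect the main obstacle to be showing that this residual collapses exactly to $\sum_{i=1}^{n}H(X_i)$, that is, that for every vertex $i$ the number of clusters containing $i$ minus the number of separators containing $i$ (each counted with multiplicity $\nu_S-1$) equals one. This is precisely where the running intersection property is used: the clusters containing a fixed $i$ induce a connected subtree of the junction tree whose edges are exactly the separators containing $i$, and for a tree the node count minus the edge count is one. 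Multiplying this vertex identity by $H(X_i)$ and summing over $i$ produces the required $\sum_{i=1}^{n}H(X_i)$; collecting the remaining terms then yields (\ref{equ:equ1}).
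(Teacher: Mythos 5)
The paper itself gives no proof of Theorem~\ref{thm:2.1}; it is quoted from \cite{SzaKo08}. Judged on its own, your argument is correct and is the standard derivation: expanding the divergence, taking logarithms of the factorization (\ref{equ:equ0}), and using the marginalization identity $\sum_{\mathbf{x}}{\rm P}(\mathbf{x})\log{\rm P}(\mathbf{x}_C)=-H(\mathbf{X}_C)$ gives the entropy form $-H(\mathbf{X})+\sum_{C}H(\mathbf{X}_C)-\sum_{S}(\nu_S-1)H(\mathbf{X}_S)$, and you correctly identify the one nontrivial point, namely the vertex-counting identity $\#\{C\in\mathcal{C}: i\in C\}-\sum_{S\ni i}(\nu_S-1)=1$. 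Your justification of that identity is sound: by the running intersection property the clusters containing $i$ induce a connected subtree whose edge set consists exactly of the junction-tree edges whose separator contains $i$, and $(\nu_S-1)$ counts the edges labelled $S$, so the identity is just ``nodes minus edges equals one'' for a nonempty tree. You also handle the support issue properly via Remark~\ceil is not needed here --- via Remark~\ref{rem:1}, which guarantees ${\rm P}_{{\rm J}}(\mathbf{x})>0$ wherever ${\rm P}(\mathbf{x})>0$. The proof is complete as outlined.
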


In Formula (\ref{equ:equ1}) $\sum\limits_{i=1}^dH\left(
X_i\right) -H\left( \mathbf{X}\right) =I\left( \mathbf{X}\right) $ is the {\it information content} of ${\rm P}(\mathbf{X})$ and it is independent from the structure of
the junction tree. It is easy to see that minimizing the Kullback-Leibler
divergence means maximizing $I_{{\rm J}}(\mathbf{X})=\sum\limits_{C\in \mathcal{C}}I\left(
\mathbf{X}_C\right) -\sum\limits_{S\in \mathcal{S}}\left( \nu _S-1\right) I\left(
\mathbf{X}_S\right) $. We call this sum as \textit{ the weight of the junction tree probability distribution}. As larger
this weight is, as better the approximation given by the junction
tree probability distribution fits to the probability distribution ${\rm P}(\mathbf{X})$. It is well known that $KL=0$ if and only if ${\rm P}(\mathbf{X})={\rm P}_{{\rm J}}\left( \mathbf{X}\right)$.

Mutual information was introduced in 1949 by Shannon and Weaver \cite{ShaWea49} as
a measure of dependence between two random variables. This concept was
generalized in two main directions. One generalization was introduced by Mc Gill \cite{McG'54}
and called \textit{interaction information} or \textit{mutual information}, and it is based on
the concept of the conditional entropy. The other generalization was introduced by Watanabe \cite{Wa60} and
it was called \textit{total correlation} and \textit{multiinformation} by
Studeny and Veinerova \cite{StuVej98}, {\it information content} by the authors \cite{SzaKo08}. Another important point of view is that
multiinformation is a special case of Csisz\'{a}r's \textit{I-divergence} \cite{Csi}.

The present paper is regarded to the multiinfiormation but we call it \textit{information content}, to be
consistent with our earlier papers. For a nice overview about the importance and
properties of multiinformation see paper \cite{StuVej98}.

Let $\mathbf{X}^{T}=(X_{1},\ldots,X_{n})$ be a given random vector and $V=\{1,\ldots,n\}$ the set of indices of the random variables.

We denote by $\mathbf{X}_A$ a random vector $\left( X_{i_1},\ldots
,X_{i_k}\right) ^T$ with $\left\{ i_1,\ldots ,i_k\right\} =A\subset V$

\begin{defn}\label{def:2.1}
The function $I:2^V\rightarrow R_{+}$ given as:
\[
I\left( \mathbf{X}_A\right) =\left\{
\begin{array}{c}\vspace{3mm}
\sum\limits_{i\in A}H\left( X_i\right) -H\left( \mathbf{X}_A\right) \quad
\mbox{if }A\in 2^V\mbox{and }\left| A\right| \geq 2 \\
0\qquad \qquad \qquad \qquad \mbox{if }A\in 2^V\mbox{and }\left| A\right| <2
\end{array}
\right.
\]
is called \textit{information content} of the random vector $\mathbf{X}_A$,
where $H\left( \cdot \right) $ means the entropy of a random variable
or a random vector.
\end{defn}

There are many machine learning/data mining applications that exploit the fact that entropy, (conditional) mutual information are submodular set functions \cite{Fuj78}, for selecting features/structure learning \cite{NaBi04}, \cite{NaBi13}.

In the following we are going to prove that the information content is
supermodular. We emphasize here that the supermodularity of the information content (multiinformation) was already proved by Studeny using the concept of Imsets \cite{Stu05}. We will give here a short proof based on our Theorem 1. 
We remind now the definition of supermodularity:

\begin{defn}\label{def:2.2}
Let be $V$ a set. A function $f:2^V\rightarrow R_{+}$ is
called {\it supermodular} if it satisfies the following condition:
\[
\forall A,B\in 2^V\quad f\left( A\cup B\right) +f\left( A\cap B\right) \geq
f\left( A\right) +f\left( B\right)
\]
\end{defn}

\begin{thm}\label{thm:2.2}
The information content $f\left( A\right)=I\left( \mathbf{X}_A\right)$ is supermodular
relative to a given probability distribution ${\rm P}(\mathbf{X})$.
\end{thm}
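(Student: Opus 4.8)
The plan is to prove supermodularity directly from the definition by expanding $I(\mathbf{X}_{A\cup B}) + I(\mathbf{X}_{A\cap B}) - I(\mathbf{X}_A) - I(\mathbf{X}_B)$ in terms of entropies and showing this quantity is nonnegative. Writing out each term using $I(\mathbf{X}_C)=\sum_{i\in C}H(X_i)-H(\mathbf{X}_C)$, the single-variable entropy contributions $\sum_i H(X_i)$ cancel exactly, because each index of $V$ appears the same number of times on both sides (a standard inclusion-exclusion bookkeeping: $\mathbf{1}_{A\cup B}+\mathbf{1}_{A\cap B}=\mathbf{1}_A+\mathbf{1}_B$ as indicator functions). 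What survives is the statement that the supermodularity of $I$ is equivalent to the submodularity of joint entropy:
\[
H(\mathbf{X}_{A\cup B})+H(\mathbf{X}_{A\cap B})\le H(\mathbf{X}_A)+H(\mathbf{X}_B).
\]
So the whole theorem reduces to proving that the entropy function $A\mapsto H(\mathbf{X}_A)$ is submodular.

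**Next I would** establish this entropy inequality, and here is where I expect to lean on Theorem~\ref{thm:2.1}. The submodularity of entropy is classically equivalent to the nonnegativity of conditional mutual information, namely $I(\mathbf{X}_{A\setminus B};\mathbf{X}_{B\setminus A}\mid \mathbf{X}_{A\cap B})\ge 0$, since rearranging the inequality above gives exactly this conditional mutual information. To stay consistent with the paper's stated intent of deriving supermodularity from their Kullback--Leibler formula, I would instead construct a specific two-cluster junction tree whose clusters are $A$ and $B$ with separator $A\cap B$ (valid precisely because $A\cap B$ separates $A\setminus B$ from $B\setminus A$ in the chain $A$--$B$). For this junction tree Formula~(\ref{equ:equ1}) yields $KL(\mathrm{P},\mathrm{P}_{\mathrm J})\ge 0$, and after the $\sum_i H(X_i)-H(\mathbf{X})$ terms are accounted for, the nonnegativity of $KL$ collapses to exactly the desired inequality $I(\mathbf{X}_A)+I(\mathbf{X}_B)-I(\mathbf{X}_{A\cup B})-I(\mathbf{X}_{A\cap B})\ge 0$. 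The nonnegativity of the Kullback--Leibler divergence (Gibbs' inequality) is the engine that drives the whole argument.

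**The main obstacle** will be the modeling restriction of the junction-tree approach rather than the inequality itself: Formula~(\ref{equ:equ1}) is phrased for a junction tree on the \emph{entire} index set $V$, whereas here I only want a tree on $A\cup B$. I would handle this by applying the theorem to the marginal distribution $\mathrm{P}(\mathbf{X}_{A\cup B})$ restricted to the index set $A\cup B$, so that the clusters $A,B$ with separator $A\cap B$ genuinely cover the relevant vertex set and the running-intersection property holds trivially for a two-cluster tree. One must also check the degenerate cases where $|A\cap B|<2$ or $A\subseteq B$, in which the corresponding $I$-terms vanish by Definition~\ref{def:2.1} and the inequality holds trivially or reduces to the data-processing-type monotonicity $I(\mathbf{X}_{A})\le I(\mathbf{X}_{A\cup B})$. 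Once these boundary cases are dispatched, the substantive content is precisely the single application of $KL\ge 0$ to the two-cluster junction tree, which is exactly the short proof ``based on Theorem~1'' the authors advertise.
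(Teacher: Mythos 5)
Your proposal is correct and follows essentially the same route as the paper: both obtain the supermodularity inequality by applying the Kullback--Leibler formula of Theorem~\ref{thm:2.1} to the two-cluster junction tree with clusters $A$, $B$ and separator $A\cap B$ (taken on the marginal distribution over $A\cup B$) and then invoking $KL\ge 0$. The only difference is organizational: the paper splits into three explicit cases (overlapping, disjoint, nested) and computes the disjoint case by a direct entropy calculation, whereas you fold the boundary cases into the vanishing of the corresponding $I$-terms by Definition~\ref{def:2.1}.
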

\begin{proof}
Let any $C_1,C_2\in 2^V$, there are three cases, see Figure \ref{fig:1}.

\begin{figure}[h]
\centering
  \includegraphics[bb=80 660 550 770,width=8cm]{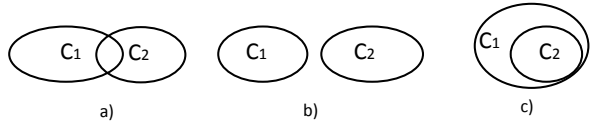}
\caption{The three possible relative positions of the sets $C_1$ and $C_2.$}
\label{fig:1}       
\end{figure}

In the case a) of Figure \ref{fig:1} the two clusters (hyperedges) form a junction tree,
therefore we can define a probability distribution as

\[
{\rm P}_{{\rm J}}\left( \mathbf{X}_{C_1\cup C_2}\right) =\frac{{\rm P}\left( \mathbf{X}%
_{C_1}\right) {\rm P}\left( \mathbf{X}_{C_2}\right) }{{\rm P}\left( \mathbf{X}%
_{C_1\cap C_2}\right) },
\]
where ${\rm P}\left( \mathbf{X}_{C_1}\right),\ {\rm P}\left( \mathbf{X}_{C_2}\right) \mbox{ and }
 {\rm P}\left( \mathbf{X}_{C_1\cap C_2}\right)$
are marginal probability distributions of ${\rm P}(\mathbf{X})$.

Using the expression of the Kullback-Leibler divergence given in Theorem \ref{thm:2.1} we have:
\begin{equation}\label{equ:equ2}
\begin{array}{l}\vspace{3mm}
KL\left( {\rm P}\left( \mathbf{X}_{C_1\cup C_2}\right) ,{\rm P}_{{\rm J}}\left( \mathbf{X}%
_{C_1\cup C_2}\right) \right)
= I\left( \mathbf{X}_{C_1\cup C_2}\right) - \\
-\left( I\left( \mathbf{X}_{C_1}\right) +I\left( \mathbf{X}_{C_2}\right)
-I\left( \mathbf{X}_{C_1\cap C_2}\right) \right) \geq 0
\end{array}
\end{equation}
and it equals to zero if and only if ${\rm P}\left( \mathbf{X}_{C_1\cup C_2}\right) $
has a junction tree structure assigned to $C_1$ and $C_2$.

Now by transforming (\ref{equ:equ2}) we obtain:
\[
I\left( \mathbf{X}_{C_1\cup C_2}\right) +I\left( \mathbf{X}_{C_1\cap
C_2}\right) \geq I\left( \mathbf{X}_{C_1}\right) +I\left( \mathbf{X}%
_{C_2}\right).
\]
In the case b) of Figure \ref{fig:1} the variables are contained in two different clusters.
We can define the following probability distribution:
\[
{\rm P}_{{\rm ind}}\left( \mathbf{X}_{C_1\cup C_2}\right) ={\rm P}\left( \mathbf{X}%
_{C_1}\right) {\rm P}\left( \mathbf{X}_{C_2}\right).
\]
By applying the abbreviation ${\rm P}(\mathbf{x}) = {\rm P}(\mathbf{X}=\mathbf{x})$ the Kullback-Leibler divergence will be:
\[
\begin{array}{l}\vspace{3mm}
KL\left( {\rm P}\left( \mathbf{X}_{C_1\cup C_2}\right) ,{\rm P}_{{\rm ind}}\left( \mathbf{X}%
_{C_1\cup C_2}\right) \right) = \\\vspace{3mm}
= \sum\limits_{\mathbf{x_{C_1\cup C_2}\in \Lambda }%
_{C_1}\times \mathbf{\Lambda }_{C_2}}{\rm P}\left( \mathbf{x}_{C_1\cup
C_2}\right) \log_2\frac{{\rm P}\left( \mathbf{x}_{C_1\cup C_2}\right)}
{{\rm P}\left( \mathbf{x}_{C_1}\right) {\rm P}\left( \mathbf{x}_{C_2}\right)}=\\\vspace{3mm}
= \sum\limits_{\mathbf{x_{C_1\cup C_2}\in\Lambda }_{C_1}\times \mathbf{\Lambda }_{C_2}}{\rm P}\left( \mathbf{x}_{C_1\cup
C_2}\right) \log_2 {\rm P}\left( \mathbf{x}_{C_1\cup C_2}\right) - \\\vspace{3mm}
- \sum\limits_{\mathbf{x_{C_{1}\cup C_{2}}} \in \Lambda_{C_{1}}\times \mathbf{\Lambda }%
_{C_{2}}}{\rm P}\left( \mathbf{x}_{C_1\cup C_2}\right) \log_2 \left[{\rm P}\left( \mathbf{x}%
_{C_1}\right) {\rm P}\left( \mathbf{x}_{C_2}\right) \right].
\end{array}
\]

The first sum equals to the negative entropy $-H\left( \mathbf{X}_{C_1\cup C_2}\right)$.

The second sum can be decomposed and then, using the fact that $C_1\cap C_2=\emptyset$, the two terms can be summed up according to all possible values of vector $\mathbf{x}_{c_2}$ resp. $\mathbf{x}_{c_1}$ as follows.
\[
\begin{array}{l}\vspace{3mm}
-\sum\limits_{\mathbf{x_{C_1\cup C_2}\in \Lambda }_{C_1}\times \mathbf{\Lambda }%
_{C_2}}{\rm P}\left( \mathbf{x}_{C_1\cup C_2}\right) \log_2 \left[{\rm P}\left( \mathbf{x}%
_{C_1}\right) {\rm P}\left( \mathbf{x}_{C_2}\right)\right] = \\\vspace{3mm}
= -\sum\limits_{\mathbf{x_{C_1\cup C_2}\in
\Lambda }_{C_1}\times \mathbf{\Lambda }_{C_2}}{\rm P}\left( \mathbf{x}_{C_1\cup
C_2}\right) \log_2 {\rm P}\left( \mathbf{x}_{C_1}\right) - \\\vspace{3mm}
- \sum\limits_{\mathbf{x_{C_1\cup C_2}\in
\Lambda }_{C_1}\times \mathbf{\Lambda }_{C_2}}{\rm P}\left( \mathbf{x}_{C_1\cup
C_2}\right) \log_2 {\rm P}\left( \mathbf{x}_{C_2}\right) = \\\vspace{3mm}
=-\sum\limits_{\mathbf{x_{C_1}\in \Lambda }_{C_1}}{\rm P}\left( \mathbf{x}_{C_1}\right)
\log_2 {\rm P}\left( \mathbf{x}_{C_1}\right) - \\\vspace{3mm}
- \sum\limits_{\mathbf{x_{C_2}\in \Lambda }%
_{C_2}}{\rm P}\left( \mathbf{x}_{C_2}\right) \log_2 {\rm P}\left( \mathbf{x}_{C_2}\right)= \\\vspace{3mm}
=H\left( \mathbf{X}_{C_1}\right) +H\left( \mathbf{X}_{C_2}\right).
\end{array}
\]
Returning to the expression of the Kullback-Leibler divergence we have:
\[
\begin{array}{l}\vspace{3mm}
KL\left( {\rm P}\left( \mathbf{X}_{C_1\cup C_2}\right) ,{\rm P}_{{\rm ind}}\left( \mathbf{X}%
_{C_1\cup C_2}\right) \right) = \\
= H\left( \mathbf{X}_{C_1}\right) +H\left(
\mathbf{X}_{C_2}\right) -H\left( \mathbf{X}_{C_1\cup C_2}\right).
\end{array}
\]
By adding and subtracting $\sum\limits_{i\in C_1}H\left( X_i\right) $ and $\sum\limits_{i\in C_2}H\left( X_i\right)$ we get:
\[
\begin{array}{l}\vspace{3mm}
KL\left( {\rm P}\left( \mathbf{X}_{C_1\cup C_2}\right) ,{\rm P}_{{\rm ind}}\left( \mathbf{X}%
_{C_1\cup C_2}\right) \right) = \\\vspace{3mm}
= H\left( \mathbf{X}_{C_1}\right)
-\sum\limits_{i\in C_1}H\left( X_i\right) +H\left( \mathbf{X}_{C_2}\right)
-\sum\limits_{i\in C_2}H\left( X_i\right) - \\\vspace{3mm}
- H\left( \mathbf{X}_{C_1\cup
C_2}\right) +\sum\limits_{i\in C_1}H\left( X_i\right) +\sum\limits_{i\in
C_2}H\left( X_i\right).
\end{array}
\]
Taking into account that in this case $C_1\cap C_2=\emptyset $:
\[
\sum\limits_{i\in C_1}H\left( X_i\right) +\sum\limits_{i\in C_2}H\left(
X_i\right) =\sum\limits_{i\in C_1\cup C_2}H\left( X_i\right)
\]
and so
\[
\begin{array}{l}\vspace{3mm}\vspace{3mm}
KL\left( {\rm P}\left( \mathbf{X}_{C_1\cup C_2}\right) ,{\rm P}_{{\rm ind}}\left( \mathbf{X}%
_{C_1\cup C_2}\right) \right) = \\\vspace{3mm}
= -\left[ \sum\limits_{i\in C_1}H\left( X_i\right) - H\left( \mathbf{X}_{C_1}\right)\right] - \\\vspace{3mm}
  -\left[ \sum\limits_{i\in C_2}H\left( X_i\right) - H\left( \mathbf{X}_{C_2}\right)\right] + \\\vspace{3mm}
+ \left[\sum\limits_{i\in C_1\cup C_2}H\left( X_i\right) -H\left( \mathbf{X}%
_{C_1\cup C_2}\right) \right]
\end{array}
\]
We express the Kullback-Leibler divergence using the information contents as follows:
\[
\begin{array}{l}\vspace{3mm}
KL\left( {\rm P}\left( \mathbf{X}_{C_1\cup C_2}\right) ,{\rm P}_{{\rm ind}}\left( \mathbf{X}%
_{C_1\cup C_2}\right) \right) = \\
= I\left( \mathbf{X}_{C_1\cup C_2}\right)
-I\left( \mathbf{X}_{C_1}\right) -I\left( \mathbf{X}_{C_2}\right) \geq 0
\end{array}
\]
Since $\left| C_1\cap C_2\right| = 0$, by Definition \ref{def:2.1} $I\left( \mathbf{X}_{C_1\cap C_2}\right) = 0$.
Taking into account this we obtain
\[
I\left( \mathbf{X}_{C_1\cup C_2}\right) +I\left( \mathbf{X}_{C_1\cap
C_2}\right) \geq I\left( \mathbf{X}_{C_1}\right) +I\left( \mathbf{X}%
_{C_2}\right).
\]

The case c) of Figure \ref{fig:1} is trivial because $C_1\cup C_2=C_1$ and $C_1\cap C_2=C_2$
so the inequality will be in fact an equality.
\end{proof}

\section{Discovering the exact structure of the Markov network}
Let $\mathbf{X=}\left( X_1,\ldots ,X_n\right) ^T$ be a random vector
taking values in the carthezian product space $\mathbf{\Lambda }=\Lambda
_1\times \ldots \times \Lambda _n$, and $V=\left\{ 1,\ldots ,n\right\} $ the
set of indices. Let us suppose its joint probability distribution ${\rm P}(\mathbf{X})$ is known. The
probability distribution is not supposed to take on necessarily its all possible realizations defined by $\Lambda _1\times \ldots \times \Lambda _n$ with positive probabilities. We give an algorithm which is polynomial in the
number of variables and discovers the Markov network endowed by the
pairwise Markov property.

Let us consider a complete graph on $V=\left\{ 1,\ldots ,n\right\} $.

If the edge $(i,j)$ is deleted from the complete graph defined on $V$ we
obtain two vertices which are not adjacent. In fact we obtain two clusters of $n-1$
elements such that $n-2$ of them are common. This represents a junction
tree. We denote the two marginal probability distributions assigned to the two clusters
$C_1=V \setminus \{i\}$ and $C_2=V \setminus \{j\}$ by ${\rm P}\left( \mathbf{X}_{C_1}\right)$ and
${\rm P}\left( \mathbf{X}_{C_2}\right) $.

\begin{thm}\label{thm:3.1}
In the Markov network of ${\rm P}(\mathbf{X})$, $X_i$ and $X_j$ are conditionally
independent with respect to the other random variables if and only if
\[
I\left( \mathbf{X}\right) =I\left( \mathbf{X}_{C_1}\right) +I\left(
\mathbf{X}_{C_2}\right) -I\left( \mathbf{X}_{C_1\cap C_2}\right)
\]
\end{thm}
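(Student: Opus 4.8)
The plan is to reduce the claimed characterization to the case $b$)/$a$) structure already handled in Theorem~\ref{thm:2.2}, and then connect the information-content identity to the vanishing of the relevant Kullback--Leibler divergence. Here $C_1 = V\setminus\{i\}$ and $C_2 = V\setminus\{j\}$, so that $C_1\cup C_2 = V$ and $C_1\cap C_2 = V\setminus\{i,j\}$. These two clusters overlap in exactly $n-2$ indices and form a two-cluster junction tree with separator $S = C_1\cap C_2$, which is precisely the configuration of case~a) in the proof of Theorem~\ref{thm:2.2}. The associated junction tree probability distribution is
\[
{\rm P}_{{\rm J}}\left( \mathbf{X}\right) =\frac{{\rm P}\left( \mathbf{X}_{C_1}\right) {\rm P}\left( \mathbf{X}_{C_2}\right) }{{\rm P}\left( \mathbf{X}_{C_1\cap C_2}\right) }.
\]

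Next I would invoke the Kullback--Leibler expression from Theorem~\ref{thm:2.1} (equivalently, the specialized inequality~(\ref{equ:equ2})) to write
\[
KL\left( {\rm P}\left( \mathbf{X}\right) ,{\rm P}_{{\rm J}}\left( \mathbf{X}\right) \right) = I\left( \mathbf{X}\right) -\left( I\left( \mathbf{X}_{C_1}\right) +I\left( \mathbf{X}_{C_2}\right) -I\left( \mathbf{X}_{C_1\cap C_2}\right) \right).
\]
Since the Kullback--Leibler divergence is always nonnegative and equals zero if and only if ${\rm P}(\mathbf{X})={\rm P}_{{\rm J}}(\mathbf{X})$, the claimed identity $I(\mathbf{X}) = I(\mathbf{X}_{C_1}) + I(\mathbf{X}_{C_2}) - I(\mathbf{X}_{C_1\cap C_2})$ holds exactly when ${\rm P}(\mathbf{X})$ factorizes as ${\rm P}_{{\rm J}}(\mathbf{X})$. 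It then remains to argue that this factorization is equivalent to the conditional independence of $X_i$ and $X_j$ given all remaining variables $\mathbf{X}_{V\setminus\{i,j\}}$. Writing $S = V\setminus\{i,j\}$, the factorization ${\rm P}(\mathbf{X}_{\{i,j\}\cup S}) = {\rm P}(\mathbf{X}_{\{i\}\cup S}){\rm P}(\mathbf{X}_{\{j\}\cup S})/{\rm P}(\mathbf{X}_S)$ is the standard defining equation of $X_i \perp X_j \mid \mathbf{X}_S$, which is exactly the pairwise Markov condition for the non-edge $(i,j)$.

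The step I expect to demand the most care is the handling of zero-probability realizations, since the paper explicitly allows ${\rm P}(\mathbf{X})$ not to be strictly positive on all of $\mathbf{\Lambda}$. The subtlety is that the junction tree distribution ${\rm P}_{{\rm J}}$ may assign positive probability to configurations where ${\rm P}$ is zero (Remark~\ref{rem:1} guarantees the inclusion only in one direction), so one must verify that the conditional-independence factorization and the equality ${\rm P}={\rm P}_{{\rm J}}$ coincide even on the support boundary. I would address this by restricting all sums to the support of ${\rm P}(\mathbf{X})$ and checking that the Kullback--Leibler divergence is still well defined and vanishes precisely on the factorization, using the convention $0\log_2 0 = 0$; the equivalence of the factorization identity with conditional independence then holds on the support, which is all that the pairwise Markov property requires.
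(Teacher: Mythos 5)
Your proof follows essentially the same route as the paper's: you form the two-cluster junction tree distribution ${\rm P}_{{\rm J}}^{i\nsim j}$, apply Theorem~\ref{thm:2.1} to identify the information-content gap with the Kullback--Leibler divergence, and use $KL=0$ iff ${\rm P}={\rm P}_{{\rm J}}$ together with the observation that the factorization is exactly the defining equation of $X_i\perp X_j\mid \mathbf{X}_{V\setminus\{i,j\}}$. Your added attention to zero-probability realizations is a welcome extra precaution but does not change the argument, which matches the paper's.
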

\begin{proof}
We assign to the junction tree obtained by deleting the edge (i,j) a
junction tree probability distribution defined by:
\begin{equation}\label{equ:equ3}
{\rm P}_{{\rm J}}^{i\nsim j}\left( \mathbf{X}_{C_1\cup C_2}\right) =\frac{{\rm P}\left(
\mathbf{X}_{C_1}\right) {\rm P}\left( \mathbf{X}_{C_2}\right) }{{\rm P}\left(
\mathbf{X}_{C_1\cap C_2}\right) }
\end{equation}
where $i\nsim j$ in the upper index means that the edge $\left( i,j\right) $
is deleted.

Using Theorem \ref{thm:2.1} the Kullback-Leibler divergence can be written as:
\[
\begin{array}{l}\vspace{3mm}
KL\left( {\rm P}\left( \mathbf{X}\right) ,{\rm P}_{{\rm J}}^{i\nsim j}\left( \mathbf{X}%
_{C_1\cup C_2}\right) \right) = I(\mathbf{X})-I_{{\rm J}}(\mathbf{X})=\\
= I\left( \mathbf{X}\right) -\left( I\left(
\mathbf{X}_{C_1}\right) +I\left( \mathbf{X}_{C_2}\right) -I\left(
\mathbf{X}_{C_1\cap C_2}\right) \right) \geq 0.
\end{array}
\]

The equality holds if and only if the true probability distribution can be
written as in equation (\ref{equ:equ3}) what is equivalent to the conditional independence of $X_i$ and $X_j$
given all the other random variables and 
this is exactly the pairwise Markov property.
\end{proof}
\vspace{1mm}

\textit{Algorithm for discovering the pairwise Markov network}\vspace{3mm}

\textit{Input:} The probability distribution ${\rm P}\left( \mathbf{X}\right)$.

\textit{Output:} The pairwise Markov graph.

\text{1)} Calculate all of the $n-1$ dimensional marginal probability distributions of ${\rm P}\left(
\mathbf{X}\right)$ and their information contents.

\text{2)} Calculate all of the $n-2$ dimansional marginal probability distributions of ${\rm P}\left(
\mathbf{X}\right) $ and their information contents.

\text{3)} For each pair $(i,j),\ i\in V,\ j\in V,\ i\neq j$ let ${\rm P}(\mathbf{X}_{C_1}) $and ${\rm P}(\mathbf{X}_{C_2})$ the marginal probability distributions assigned to $C_{1}=V\setminus \left\{i\right\}$ and $C_{2}=V\setminus \left\{j\right\}$, test the following
equality:
\[
\begin{array}{rcl}\vspace{3mm}
KL&=&I(\mathbf{X})-I_{{\rm J}}(\mathbf{X})=\\
&=&I\left( \mathbf{X}\right) -\left[I\left( \mathbf{X}_{C_1}\right) +I\left(
\mathbf{X}_{C_2}\right) -I\left( \mathbf{X}_{C_1\cap C_2}\right)\right]
\end{array}
\]

If it equals 0, then $i$ and $j$ are not adjacent;
otherwise they are adjacent.
\vspace*{3mm}

In the following we remind some implications between different properties of the Markov network.
It is well known that ${\rm GM} \Rightarrow {\rm LM} \Rightarrow {\rm PM}$. The equivalence between them was proved in the Hammersley-Clifford theorem under the assumption that all vector realizations defined by the chartesian product
$\mathbf{x\in \Lambda }_1\times \ldots \times \mathbf{\Lambda }_n$ are taken on with positive probability (positivity condition).
Based on this, our algorithm is useful also for discovering the informations given by the ${\rm LM}$ and ${\rm GM}$ properties when positivity condition is fulfilled. However the positivity assumption is not necessary, therefore the following question arises. For which class of probability distributions holds the equivalence of the three Markov properties (${\rm GM} \Leftrightarrow {\rm LM} \Leftrightarrow {\rm PM}$)?

Let us regard a junction tree probability distribution. In this
case the probability distribution factorizes relative to a junction tree via
Formula (\ref{equ:equ0}). If we assign to the junction tree a graph structure such that two
vertices sharing the same cluster are connected, we obtain a triangulated
graph and call it {\it junction tree graph}.

\begin{defn}\label{def:3.1}
We say that the Markov network has the {\it Markov junction
tree property} (MJ) if the probability distribution factorizes relative to
the junction tree graph via Formula  (\ref{equ:equ0}).
\end{defn}
\begin{defn}\label{def:3.2}
We say that a junction tree probability distribution is
\textit{saturated} when there is no cluster such that two vertices belonging
to the same cluster are conditionally independent given all the other
vertices.
\end{defn}

The saturation condition is essential for ${\rm MJ}\Leftrightarrow {\rm PM}$,
but it is not necessary for ${\rm GM}\Leftrightarrow {\rm LM}\Leftrightarrow {\rm PM}$. 

In other words we say that a junction tree probability distribution is
\textit{saturated }when it factorizes relative to a triangulated graph
endowed by the pairwise Markov property.

\begin{rem}\label{rem:2}
Obviously the junction tree graph of junction tree probability
distribution is endowed by the global Markov property, so ${\rm MJ}\Rightarrow
{\rm GM}\Rightarrow {\rm LM}\Rightarrow {\rm PM}$.
\end{rem}
\begin{thm}\label{thm:3.2}
For a junction tree probability distribution
which is saturated the  pairwise Markov property implies Markov junction
tree property.
\end{thm}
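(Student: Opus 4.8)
The plan is to reduce the factorization asserted by the MJ property to a family of conditional independences sitting on the separators of the junction tree, and then to recover exactly those independences from the pairwise Markov property with the saturation condition doing the work that positivity does in Hammersley--Clifford. First I would fix a numbering $C_1,\ldots,C_m$ of the clusters satisfying the running intersection property and set $F_k=C_1\cup\cdots\cup C_k$, $S_k=C_k\cap F_{k-1}$ and $R_k=C_k\setminus S_k$, so that $S_k$ separates $R_k$ from $F_{k-1}\setminus S_k$. Applying the construction of case a) in the proof of Theorem \ref{thm:2.2} with $C_1\cup C_2$ replaced by $F_{k-1}$ and $C_k$ (so that the intersection is exactly $S_k$), for each $k\ge 2$ one obtains
\[
I\left(\mathbf{X}_{F_k}\right)-\left(I\left(\mathbf{X}_{F_{k-1}}\right)+I\left(\mathbf{X}_{C_k}\right)-I\left(\mathbf{X}_{S_k}\right)\right)
=KL\left({\rm P}\left(\mathbf{X}_{F_k}\right),\ \frac{{\rm P}\left(\mathbf{X}_{F_{k-1}}\right){\rm P}\left(\mathbf{X}_{C_k}\right)}{{\rm P}\left(\mathbf{X}_{S_k}\right)}\right)\ge 0,
\]
with equality precisely when $\mathbf{X}_{R_k}$ and $\mathbf{X}_{F_{k-1}\setminus S_k}$ are conditionally independent given $\mathbf{X}_{S_k}$.

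Next I would sum these identities over $k=2,\ldots,m$. The left-hand sides telescope, leaving $I(\mathbf{X}_V)-I(\mathbf{X}_{C_1})-\sum_{k\ge 2}I(\mathbf{X}_{C_k})+\sum_{k\ge 2}I(\mathbf{X}_{S_k})$, and since in a running intersection ordering a separator $S$ occurs as some $S_k$ for exactly $\nu_S-1$ indices, one has $\sum_{k\ge 2}I(\mathbf{X}_{S_k})=\sum_{S\in\mathcal S}(\nu_S-1)I(\mathbf{X}_S)$. Comparing with Theorem \ref{thm:2.1} this yields $KL\left({\rm P}(\mathbf{X}),{\rm P}_{{\rm J}}(\mathbf{X})\right)=I(\mathbf{X})-I_{{\rm J}}(\mathbf{X})=\sum_{k=2}^m KL_k$, where $KL_k$ denotes the $k$-th nonnegative divergence above. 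As every summand is nonnegative, the total vanishes, i.e. ${\rm P}(\mathbf{X})={\rm P}_{{\rm J}}(\mathbf{X})$ and Formula (\ref{equ:equ0}) holds, if and only if each separator independence $\mathbf{X}_{R_k}\perp\mathbf{X}_{F_{k-1}\setminus S_k}\mid\mathbf{X}_{S_k}$ is valid. Thus the MJ property is equivalent to the validity of all these across-separator block independences, and the forward direction is already contained in Remark \ref{rem:2}.

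It remains to derive these block independences from the pairwise Markov property, and this is where I expect the genuine difficulty to lie. I would argue by induction on $m$, peeling off a leaf cluster $C_m$ with separator $S_m$: for every $a\in R_m$ and every $b\in F_{m-1}\setminus S_m$ the vertices $a$ and $b$ are non-adjacent in the junction tree graph, so the pairwise Markov property gives that $X_a$ and $X_b$ are conditionally independent given all remaining variables. The obstacle is to assemble these single-edge statements into the one block statement $\mathbf{X}_{R_m}\perp\mathbf{X}_{F_{m-1}\setminus S_m}\mid\mathbf{X}_{S_m}$, shrinking the conditioning set from $V\setminus\{a,b\}$ down to $S_m$; in general this intersection-type composition of independences requires the positivity condition, and here it must instead be furnished by saturation, which I would use to ensure that the conditional laws appearing inside each cluster are strictly positive on their support so that the pairwise statements legitimately compose. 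Granting the leaf independence, $\,{\rm P}(\mathbf{X})={\rm P}(\mathbf{X}_{F_{m-1}}){\rm P}(\mathbf{X}_{C_m})/{\rm P}(\mathbf{X}_{S_m})$, and the marginal ${\rm P}(\mathbf{X}_{F_{m-1}})$ is again a saturated junction tree distribution with the pairwise Markov property on the reduced junction tree $\mathcal C\setminus\{C_m\}$, so the induction hypothesis applies and, combined with the leaf factorization, delivers Formula (\ref{equ:equ0}) for the whole tree. The hardest point is exactly this composition step: controlling the supports so that saturation plays the role of the positivity hypothesis of Hammersley--Clifford.
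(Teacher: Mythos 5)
The paper states Theorem \ref{thm:3.2} without giving any proof, so there is nothing to compare your argument against; I can only assess it on its own terms. Your telescoping decomposition of $KL\left({\rm P}(\mathbf{X}),{\rm P}_{{\rm J}}(\mathbf{X})\right)$ into the nonnegative summands $KL_k$ attached to the separators is correct and cleanly shows that the factorization (\ref{equ:equ0}) is equivalent to the family of block independences $\mathbf{X}_{R_k}\perp\mathbf{X}_{F_{k-1}\setminus S_k}\mid\mathbf{X}_{S_k}$. But the final step, where you try to assemble the pairwise statements $X_a\perp X_b\mid\mathbf{X}_{V\setminus\{a,b\}}$ into these block independences, is a genuine gap, not a technicality. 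That assembly is exactly the intersection/contraction machinery of Hammersley--Clifford, it can fail without positivity (the Moussouris example in Section III is built to exhibit such failures), and saturation cannot repair it: Definition \ref{def:3.2} only asserts the \emph{absence} of certain conditional independences inside clusters, and says nothing about supports or the strict positivity of conditional laws. So the sentence in which you ask saturation ``to ensure that the conditional laws appearing inside each cluster are strictly positive on their support'' attributes to the hypothesis a content it does not have; as written, the induction step does not go through.

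The underlying problem is that you have not used the first hypothesis of the theorem at all: ${\rm P}(\mathbf{X})$ is assumed to \emph{be} a junction tree probability distribution, i.e.\ Formula (\ref{equ:equ0}) already holds for some junction tree $(V,\mathcal{C},\mathcal{S})$. Hence every block independence across a separator is available for free (this is Remark \ref{rem:2}: ${\rm MJ}\Rightarrow{\rm GM}$), and there is nothing to derive from the pairwise statements. What actually needs proving is the identification of graphs: by Remark \ref{rem:2} the pairwise Markov network graph is a subgraph of the junction tree graph (any pair not sharing a cluster is conditionally independent given the rest), while saturation says precisely that no edge of the junction tree graph can be deleted, i.e.\ every pair sharing a cluster is conditionally dependent given the rest. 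Together these give that the pairwise Markov graph \emph{equals} the junction tree graph, and since ${\rm P}(\mathbf{X})$ factorizes relative to that graph via (\ref{equ:equ0}), the Markov junction tree property holds relative to the pairwise Markov network. Reorganized this way, your KL identity is not needed for the theorem itself (though it is a nice quantitative companion to Theorem \ref{thm:2.1}), and the problematic composition step disappears entirely.
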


In the next part we give an example where ${\rm MJ}\Leftrightarrow
{\rm GM}\Leftrightarrow {\rm LM}\Leftrightarrow {\rm PM}$ without fulfilling the positivity condition.

For illustrating this we recall Mousourris's sound
example \cite{Mou74}. The pairwise Markov network is given in Figure 2.

\begin{figure}[h]
\centering
  \includegraphics[bb=95 645 240 780,width=1.5cm]{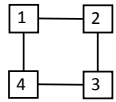}
\caption{The pairwise Markov network corresponding to Mousourris's example.}
\label{fig:2}       
\end{figure}

Since the probability distribution can be expressed as a junction tree probability distribution 
${\rm P}_{{\rm J}}^{i\nsim j}\left( \mathbf{X} \right)$ 
relative to $ C_1=\{1, 2, 4\};C_2=\{2, 3, 4\} $, respectively to $C_1=\{1, 3, 4\};C_2=\{1, 2, 3\}$ (these follow from
the equalities $KL=I\left( \mathbf{X}\right) - \left[I\left( \mathbf{X}%
_{C_1}\right) +I\left( \mathbf{X}_{C_2}\right) -I\left( \mathbf{X}%
_{C_1\cap C_2}\right) \right]=0$ for both cases, see Table I). 
The conditional independences $X_1\perp X_3|X_2,X_4$, and $X_2\perp X_4|X_1,X_3$ needed for
proving the global Markov property are verified, despite the junction tree probability distribution is not saturated.
In \cite{SadLau12} the reader can find other conditions on graphical models for which the equivalency of the three Markov properties hold.

\begin{table}[!ht]
\label{tab:1}
\caption{The information contents used for calculating the Kullback-Leibler divergences in the case of Mousourris's example}
\begin{center}
\begin{tabular}{|lc|c|}
\hline
\multicolumn{2}{|c|}{Information contents}       & K-L divergences \\\hline
$I(\mathbf{X})$                                & 1.000000 &  \\\hline
$I\left(\mathbf{X}_{V\setminus\{4\}}\right)$   & 0.500000 &  \\
$I\left(\mathbf{X}_{V\setminus\{3\}}\right)$   & 0.500000 &  \\
$I\left(\mathbf{X}_{V\setminus\{2\}}\right)$   & 0.500000 &  \\
$I\left(\mathbf{X}_{V\setminus\{1\}}\right)$   & 0.500000 &  \\\hline
$I\left(\mathbf{X}_{V\setminus\{3,4\}}\right)$ & 0.188722 & 0.188722   \\
$I\left(\mathbf{X}_{V\setminus\{2,4\}}\right)$ & 0.000000 & 0.000000   \\
$I\left(\mathbf{X}_{V\setminus\{2,3\}}\right)$ & 0.188722 & 0.188722   \\
$I\left(\mathbf{X}_{V\setminus\{1,4\}}\right)$ & 0.188722 & 0.188722   \\
$I\left(\mathbf{X}_{V\setminus\{1,3\}}\right)$ & 0.000000 & 0.000000   \\
$I\left(\mathbf{X}_{V\setminus\{1,2\}}\right)$ & 0.188722 & 0.188722   \\\hline
\end{tabular}
\end{center}
\end{table}

\section{A numerical example}
In this part we give an example of a random vector containing 8 random
variables. Each of them takes on two values. So the number of the possible realizations is $%
2^8=256$, but only 146 out of them are taken on with positive
probability. The entire probability distribution can be obtined from the authors by e-mail. We illustrate our algorithm on this example.

First using the information contents of all  $n-1=7$ variables and all $n-2=6
$ variables we apply our algorithm, and obtain the set of missing edges as result.
This can be followed in Table II.

\begin{table}[!ht]
\label{tab:2}
\caption{The information contents used for calculating the Kullback-Leibler divergences}
\begin{center}
\begin{tabular}{|lc|c|}
\hline
\multicolumn{2}{|c|}{Information contents}       & K-L divergences \\\hline
$I(\mathbf{X})$                                & 0.766387 &  \\\hline
$I\left(\mathbf{X}_{V\setminus\{8\}}\right)$   & 0.567413 &  \\
$I\left(\mathbf{X}_{V\setminus\{7\}}\right)$   & 0.559660 &  \\
$I\left(\mathbf{X}_{V\setminus\{6\}}\right)$   & 0.669253 &  \\
$I\left(\mathbf{X}_{V\setminus\{5\}}\right)$   & 0.618435 &  \\
$I\left(\mathbf{X}_{V\setminus\{4\}}\right)$   & 0.569887 &  \\
$I\left(\mathbf{X}_{V\setminus\{3\}}\right)$   & 0.679611 &  \\
$I\left(\mathbf{X}_{V\setminus\{2\}}\right)$   & 0.422819 &  \\
$I\left(\mathbf{X}_{V\setminus\{1\}}\right)$   & 0.473500 &  \\\hline
$I\left(\mathbf{X}_{V\setminus\{7,8\}}\right)$ & 0.421383 & 0.060698   \\
$I\left(\mathbf{X}_{V\setminus\{6,8\}}\right)$ & 0.470279 & 0.000000   \\
$I\left(\mathbf{X}_{V\setminus\{6,7\}}\right)$ & 0.462526 & 0.000000   \\
$I\left(\mathbf{X}_{V\setminus\{5,8\}}\right)$ & 0.419460 & 0.000000   \\
$I\left(\mathbf{X}_{V\setminus\{5,7\}}\right)$ & 0.456535 & 0.044828   \\
$I\left(\mathbf{X}_{V\setminus\{5,6\}}\right)$ & 0.573129 & 0.051828   \\
$I\left(\mathbf{X}_{V\setminus\{4,8\}}\right)$ & 0.412183 & 0.202216   \\
$I\left(\mathbf{X}_{V\setminus\{4,7\}}\right)$ & 0.419360 & 0.056200   \\
$I\left(\mathbf{X}_{V\setminus\{4,6\}}\right)$ & 0.523509 & 0.050756   \\
$I\left(\mathbf{X}_{V\setminus\{4,5\}}\right)$ & 0.487051 & 0.065117   \\
$I\left(\mathbf{X}_{V\setminus\{3,8\}}\right)$ & 0.523165 & 0.042528   \\
$I\left(\mathbf{X}_{V\setminus\{3,7\}}\right)$ & 0.515070 & 0.042187   \\
$I\left(\mathbf{X}_{V\setminus\{3,6\}}\right)$ & 0.582477 & 0.000000   \\
$I\left(\mathbf{X}_{V\setminus\{3,5\}}\right)$ & 0.531658 & 0.000000   \\
$I\left(\mathbf{X}_{V\setminus\{3,4\}}\right)$ & 0.483111 & 0.000000   \\
$I\left(\mathbf{X}_{V\setminus\{2,8\}}\right)$ & 0.289114 & 0.065270   \\
$I\left(\mathbf{X}_{V\setminus\{2,7\}}\right)$ & 0.231258 & 0.015167   \\
$I\left(\mathbf{X}_{V\setminus\{2,6\}}\right)$ & 0.325685 & 0.000000   \\
$I\left(\mathbf{X}_{V\setminus\{2,5\}}\right)$ & 0.274866 & 0.000000   \\
$I\left(\mathbf{X}_{V\setminus\{2,4\}}\right)$ & 0.226318 & 0.000000   \\
$I\left(\mathbf{X}_{V\setminus\{2,3\}}\right)$ & 0.336042 & 0.000000   \\
$I\left(\mathbf{X}_{V\setminus\{1,8\}}\right)$ & 0.300521 & 0.025995   \\
$I\left(\mathbf{X}_{V\setminus\{1,7\}}\right)$ & 0.266773 & 0.000000   \\
$I\left(\mathbf{X}_{V\setminus\{1,6\}}\right)$ & 0.376367 & 0.000000   \\
$I\left(\mathbf{X}_{V\setminus\{1,5\}}\right)$ & 0.325548 & 0.000000   \\
$I\left(\mathbf{X}_{V\setminus\{1,4\}}\right)$ & 0.277000 & 0.000000   \\
$I\left(\mathbf{X}_{V\setminus\{1,3\}}\right)$ & 0.386724 & 0.000000   \\
$I\left(\mathbf{X}_{V\setminus\{1,2\}}\right)$ & 0.408454 & 0.278523   \\\hline
\end{tabular}
\end{center}
\end{table}

In Figure 3 a) one can see the Markov graph structure endowed by the PM
property. Those $(i,j)$ edges are missing for which  
$KL\left( {\rm P}( \mathbf{X}\right) ,{\rm P}_{{\rm J}}^{i\nsim j}(\mathbf{X}_{C_1\cup C_2}))$  
(see Table II). 

\begin{figure}[h]
\centering
  \includegraphics[bb=80 480 510 785,width=8cm]{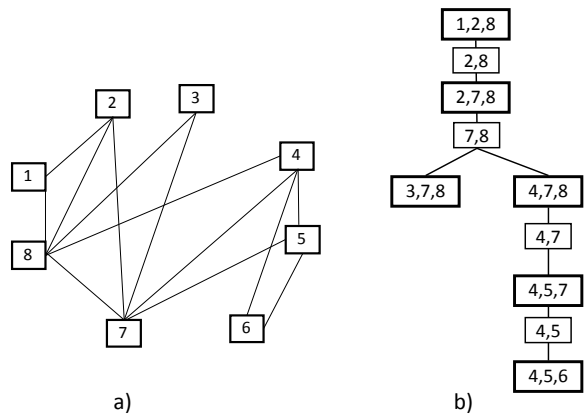}
\caption{The pairwise Markov network corresponding to Mousourris's example.}
\label{fig:3}       
\end{figure}

The junction tree assigned to ${\rm P}(\mathbf{X})$ is the one in Figure 3 b) since 
$KL\left( {\rm P}( \mathbf{X}\right) ,{\rm P}_{(\mathcal{C},\mathcal{S})}(\mathbf{X}))=0$
relative to the set of clusters  
$\mathcal{C}=\left\{ \left\{ 1,2,8 \right\}, \left\{ 2,7,8 \right\}, \left\{ 3,7,8 \right\},\right.$ $\left. \left\{ 4,7,8 \right\}, \left\{ 4,5,7 \right\}, \left\{ 4,5,6 \right\} \right\}$
and the set of separators  
$\mathcal{S}=\left\{  \left\{ 2,8 \right\}, \left\{ 7,8 \right\},\left\{ 7,8 \right\},\left\{ 4,7 \right\},\left\{ 4,5 \right\} \right\}$. 
It is easy to see that the junction tree in Figure 3 b) corresponds to the triangulated graph of the pairwise Markov network in Figure 3 a). This shows that in this example ${\rm GM}\Leftrightarrow {\rm PM}$  without satisfying the positivity condition.
\section{Conclusion}
In this paper we gave a new and short proof for the supermodularity of the information content based on one of our earlier theorems. Using this the problem of finding the best approximating junction tree for a given probability distribution can be reformulated as maximization a supermodular function on an independence set. This is a direction for future work. 
We gave a method for discovering the Markov network endowed by the pairwise Markov property of a given multivariate probability distribution. We believe that this method may have a great impact on researches in different fields, where discovering the dependence structure between the attributes is more complex and where experts have not preliminary insight to the problem. If the results have to be used on an empirical dataset then one should use results according to the asymptotic behaviour of empirical multiinformation as it is given in \cite{Stu87}.

\end{document}